\newtheorem{theorem}{Theorem}
\newtheorem{lemma}[theorem]{Lemma}
\newtheorem{proposition}[theorem]{Proposition}
\title{On Christoffel roots for nondetached slowness surfaces}
\author{
Len Bos\footnote{%
Dipartimento di Informatica, Universit\`a di Verona, Italy, {\tt leonardpeter.bos@univr.it}}\,, 
Michael A. Slawinski\footnote{%
Department of Earth Sciences, Memorial University of Newfoundland,
{\tt mslawins@mac.com}}\,, 
Theodore Stanoev\footnote{%
Department of Earth Sciences, Memorial University of Newfoundland,
{\tt theodore.stanoev@gmail.com}}
}
\date{}
\begin{document}
\maketitle
\begin{abstract}
The only restriction on the values of the elasticity parameters is the stability condition.
Within this condition, we examine Christoffel equation for nondetached $qP$ slowness surfaces in transversely isotropic media.
If the~$qP$ slowness surface is detached, each root of the solubility condition corresponds to a distinct smooth wavefront.
If the~$qP$ slowness surface is nondetached, the roots are elliptical but do not correspond to distinct wavefronts; also, the $qP$ and $qSV$ slowness surfaces are not smooth.
\end{abstract}
\section{Introduction}
Since the studies of~\citet{Rudzki1911}\footnote{%
This publication, which was presented to the Academy of Sciences at Cracow in 1911, has been translated with comments by Klaus Helbig and Michael A. Slawinski; it appears as~\citet{Rudzki2003}.
}, characterizing shapes of wavefronts in anisotropic media has been of interest to seismologists.
\citet{Postma1955} derived a condition for elliptical velocity dependence in homogeneous transversely isotropic media that is equivalent to alternating isotropic layers.
This condition was generalized by~\citet{Berryman1979} for ``any horizontally stratified, homogeneous material whose constituent layers are isotropic.''
The proof for nonexistence of ellipticity of $qP$ wavefronts in media resulting from lamellation came from~\citet{Helbig1979}, in response to~\citet{Levin1978}.
Shortly thereafter,~\citet[p.~826]{Helbig1983} stated the following.
(1) The wavefront of $qP$ waves is never an ellipsoid; (2) the wavefront of $qSV$ waves is never an ellipsoid; (3) the wavefront of $SH$ waves is always an oblate ellipsoid.
Lamellation, which is described by~\citet{Helbig1979,Helbig1983} as fine layering on a scale small compared with the wavelength, is tantamount to using the~\citet{Backus1962} average; throughout this paper, we use the methodology of the latter.

We consider the three roots of the solubility condition of the Christoffel equation to which we refer as \emph{Christoffel roots}.
These roots correspond to the wavefront-slowness surfaces of the three waves that propagate in an anisotropic Hookean solid.
Herein, we examine transversely isotropic media that results from the~\citeauthor{Backus1962} average of isotropic layers.
We derive the conditions under which the spherical-coordinate plots of the three roots are ellipsoidal; we refer to such roots as \emph{elliptical}.
In accordance with polar reciprocity, the ellipticity of wavefront slownesses is equivalent to ellipticity of wavefronts. 

As it turns out, a necessary condition for the ellipticity of roots is the nondetachment of the $qP$ slowness surface.
Although the Hookean solids that represent most materials encountered in seismology exhibit a detached $qP$ slowness surface, the existence of both detached and nondetached slowness surfaces is, indeed, permissible within the stability condition of the elasticity tensor~\citep{BucataruSlawinski2009}.
Mathematically, this condition is the positive definiteness of the elasticity tensor.
\section{Christoffel equation in Backus media}
\label{sec:ChristoffelEquation}
The existence of waves in anisotropic media is governed by the Christoffel equation; its solubility condition is~\citep[e.g.,][Section~7.3]{Slawinski2015}
\begin{equation*}
\det
\left[
	\sum_{j=1}^3\sum_{\ell=1}^3 
	c_{ijk\ell}\,p_{j}p_{\ell} 
	-
	\delta_{ik}
\right]
=
0
\,,\qquad 
i,k=1\,,\,2\,,\,3\,,
\end{equation*}
where $c_{ijk\ell}$ is a density-scaled elasticity tensor
and $p$ is the wavefront-slowness vector
.
The three roots of this bicubic equation can be stated as the expressions for the wavefront speeds of the $qP$\,, $qSV$ and $SH$ waves.

Let us consider a homogeneous transversely isotropic medium, whose elasticity parameters are
\begin{equation}
\def\arraystretch{1.2}
\label{eq:cTI}
c^{\overline{\rm TI}}
=
\left[\begin{array}{c*{5}{c}}
c^{\overline{\rm TI}}_{1111} & c^{\overline{\rm TI}}_{1122} & c^{\overline{\rm TI}}_{1133} & 0 & 0 & 0 \\
c^{\overline{\rm TI}}_{1122} & c^{\overline{\rm TI}}_{1111} & c^{\overline{\rm TI}}_{1133} & 0 & 0 & 0 \\
c^{\overline{\rm TI}}_{1133} & c^{\overline{\rm TI}}_{1133} & c^{\overline{\rm TI}}_{3333} & 0 & 0 & 0 \\
0 & 0 & 0 & 2\,c^{\overline{\rm TI}}_{2323} & 0 & 0 \\
0 & 0 & 0 & 0 & 2\,c^{\overline{\rm TI}}_{2323} & 0 \\
0 & 0 & 0 & 0 & 0 & c^{\overline{\rm TI}}_{1111} - c^{\overline{\rm TI}}_{1122}
\end{array}\right]
\,.
\def\arraystretch{1}
\end{equation}
Herein, superscript ${}^{\overline{\rm TI}}$ indicates transverse isotropy resulting from the~\citet{Backus1962} average.
Using $n_{3} = \sin^{2}\vartheta$ to express the wavefront orientation, we parameterize the expressions of the three roots~\citep[e.g.,][equation~(9.2.19),~(9.2.20)]{Slawinski2015} as 
\begin{equation}
\label{eq:vqPqSV}
v_{qP,qSV}
=
\sqrt{
	\frac{
		\left(
			c^{\overline{\rm TI}}_{3333} -
			c^{\overline{\rm TI}}_{1111}
		\right)\left(1-n_{3}\right) 
		+
		c^{\overline{\rm TI}}_{1111} 
		+
		c^{\overline{\rm TI}}_{2323} 
		\pm
		\sqrt{\Delta}
	}{
		2\,\rho
	}
}
\end{equation}
and
\begin{equation}
\label{eq:vSH}
v_{SH}
=
\sqrt{
	\frac{
		c^{\overline{\rm TI}}_{1212}\,
		n_{3}
		+ 
		c^{\overline{\rm TI}}_{2323}\,
		\left(1 - n_{3}\right)
	}{
		\rho
	}
}
\,,
\end{equation}
where $\Delta = a\left(n_{3}\right)^{2} + b\,n_{3} + c$\,, with
\begin{subequations}
\begin{align}
\label{eq:DiscDelta_A}
a
&=
\left(
	c^{\overline{\rm TI}}_{1111} +
	2\,c^{\overline{\rm TI}}_{1133} +
	c^{\overline{\rm TI}}_{3333}
\right)
\left(
	c^{\overline{\rm TI}}_{1111} -
	2\,c^{\overline{\rm TI}}_{1133} -
	4\,c^{\overline{\rm TI}}_{2323} +
	c^{\overline{\rm TI}}_{3333}
\right)
\,,
\\
b
&=
2\,c^{\overline{\rm TI}}_{1111}\,c^{\overline{\rm TI}}_{2323}
-
2\,c^{\overline{\rm TI}}_{1111}\,c^{\overline{\rm TI}}_{3333}
+
4\left(c^{\overline{\rm TI}}_{1133}\right)^{2} 
+
8\,c^{\overline{\rm TI}}_{1133}\,c^{\overline{\rm TI}}_{2323} 
+
6\,c^{\overline{\rm TI}}_{2323}\,c^{\overline{\rm TI}}_{3333} 
+
2\left(c^{\overline{\rm TI}}_{3333}\right)^{2}
\,,
\\
\label{eq:DiscDelta_C}
c
&=
\left(
	c^{\overline{\rm TI}}_{2323}
	-
	c^{\overline{\rm TI}}_{3333}
\right)^{2}
\,.
\end{align}
\end{subequations}
The reciprocal of root~\eqref{eq:vSH} is elliptical.
The reciprocals of roots~\eqref{eq:vqPqSV} are elliptical if and only if~$\Delta$ is a perfect square; in other words, if and only if the expression is $\sqrt{d + f\sin^{2}\vartheta}$\,, where $d$ and $f$ are nonzero real constants.
This happens if and only if the discriminant of $\Delta$\,, which we denote by ${\rm Disc}\left(\Delta\right)$\,, is zero.
In view of expressions~\eqref{eq:DiscDelta_A}--\eqref{eq:DiscDelta_C},
\begin{equation*}
{\rm Disc}\left(\Delta\right)
:=
16
\left(
	c^{\overline{\rm TI}}_{1133} +
	c^{\overline{\rm TI}}_{2323}
\right)^{2} 
\left(
	c^{\overline{\rm TI}}_{1111}
	\left(
		c^{\overline{\rm TI}}_{2323} -
		c^{\overline{\rm TI}}_{3333}
	\right) 
	+
	\left(c^{\overline{\rm TI}}_{1133}\right)^{2}
	+
	2\,
	c^{\overline{\rm TI}}_{1133}\,
	c^{\overline{\rm TI}}_{2323}
	+
	c^{\overline{\rm TI}}_{2323}\,
	c^{\overline{\rm TI}}_{3333}
\right)
=
0
\,.
\end{equation*}
The solutions are
\begin{subequations}
\begin{align}
\label{eq:DiscDelta_0_Sol_1}
c^{\overline{\rm TI}}_{2323} 
&= 
-c^{\overline{\rm TI}}_{1133}
\,,
\\
\label{eq:DiscDelta_0_Sol_2}
c^{\overline{\rm TI}}_{2323} 
&= 
-c^{\overline{\rm TI}}_{1133}
\quad{\rm and}\quad
c^{\overline{\rm TI}}_{1111} 
= 
c^{\overline{\rm TI}}_{2323}
\,,
\\
\label{eq:DiscDelta_0_Sol_3}
c^{\overline{\rm TI}}_{3333}
&=
\frac{
	\left(c^{\overline{\rm TI}}_{1133}\right)^{2} 
	+ 
	c^{\overline{\rm TI}}_{1111}\,
	c^{\overline{\rm TI}}_{2323} 
	+ 
	2\,
	c^{\overline{\rm TI}}_{1133}\,
	c^{\overline{\rm TI}}_{2323}
}{
	c^{\overline{\rm TI}}_{1111} - 
	c^{\overline{\rm TI}}_{2323}
}
\,;
\end{align}
\end{subequations}
solution~\eqref{eq:DiscDelta_0_Sol_2} is a special case of solution~\eqref{eq:DiscDelta_0_Sol_1} but we keep both for convenient referencing below.
For the~\citeauthor{Backus1962} average, solution~\eqref{eq:DiscDelta_0_Sol_2} cannot be satisfied within the stability condition; it would require
$
\sum_{i=1}^{n}\left(1/\left(c_{1111}\right)_{i}\right)
=
\sum_{i=1}^{n}\left(1/\left(c_{2323}\right)_{i}\right)
\,, 
$
which is not allowed.
Solution~\eqref{eq:DiscDelta_0_Sol_3} can be satisfied if and only if $c_{2323}$ is the same for all layers, which results in an isotropic average~\citep[Section~6]{Backus1962}.

If we consider a stack of $n$ isotropic layers, whose elasticity parameters are
\begin{equation*}
c_{1111}
= 
\left\{
	\left(c_{1111}\right)_{1}\,,\,
	\dots\,,\,
	\left(c_{1111}\right)_{n}
\right\}
\qquad{\rm and}\qquad
c_{2323}
= 
\left\{
	\left(c_{2323}\right)_{1}\,,\,
	\dots\,,\,
	\left(c_{2323}\right)_{n}
\right\}
\,,
\end{equation*}
the stability condition for each layer is~\citep[e.g.,][Exercise~5.3]{Slawinski2018}
\begin{equation*}
\left(c_{1111}\right)_{i}
>
\frac{4}{3}\left(c_{2323}\right)_{i}>0
\,,\qquad
i = 1\,,\,\dots\,,\,n
\,.
\end{equation*}

The~\citeauthor{Backus1962}-average elasticity parameters are
\begin{subequations}
\label{eq:BackusParameters_TIfromIso_n}
\begin{align}
c^{\overline{\rm TI}}_{1111}
&=
\left(1 - \frac{2}{n}\,Y\right)^{2}
\left(n\,W^{-1}\right)
+
\frac{4}{n}
\left(U - Z\right)
\,,\\
c^{\overline{\rm TI}}_{1122}
&=
\left(1 - \frac{2}{n}\,Y\right)^{2}
\left(n\,W^{-1}\right)
+
\frac{2}{n}
\left(U - 2\,Z\right)
\,,\\
c^{\overline{\rm TI}}_{3333}
&=
n\,W^{-1}
\,,\\
c^{\overline{\rm TI}}_{1133}
&=
\left(1 - \frac{2}{n}\,Y\right)
\left(n\,W^{-1}\right)
\,,\\
c^{\overline{\rm TI}}_{2323}
&=
n\,V^{-1}
\,,\\
c^{\overline{\rm TI}}_{1212}
&=
n^{-1}\,U
\,,
\end{align}
\end{subequations}
where
\begin{equation}
\label{eq:UVWYZ}
U
:=
\sum\limits_{i=1}^{n}
\left(c_{2323}\right)_{i}
\,,\,
V
:=
\sum\limits_{i=1}^{n}
\frac{
	1
}{
	\left(c_{2323}\right)_{i}
}
\,,\,
W
:=
\sum\limits_{i=1}^{n}
\frac{
	1
}{
	\left(c_{1111}\right)_{i}
}
\,,\,
Y
:=
\sum\limits_{i=1}^{n}
\frac{
	\left(c_{2323}\right)_{i}
}{
	\left(c_{1111}\right)_{i}
}
\,,\,
Z
:=
\sum\limits_{i=1}^{n}
\frac{
	\left(\left(c_{2323}\right)_{i}\right)^{2}
}{
	\left(c_{1111}\right)_{i}
}
\,.
\end{equation}

A standard form of these parameters is given by, for example,~\citet[Section~4.2.2]{Slawinski2018}; the expressions, therein, and those of parameterizations~\eqref{eq:BackusParameters_TIfromIso_n}, are equivalent to $A$\,, $B$\,, $C$\,, $F$\,, $L$\,, $M$ of~\citet[equations~(13)]{Backus1962}, respectively.
The stability of the~\citeauthor{Backus1962} average is inherited from the stability of the layers~\citep[Proposition~4.1]{Slawinski2018}; in other words, if the layers are stable, so is the average.
\section{Christoffel roots}
\label{sec:RootsChristoffel}
Solutions~\eqref{eq:DiscDelta_0_Sol_1}--\eqref{eq:DiscDelta_0_Sol_3} can be written in terms of parameterizations~\eqref{eq:BackusParameters_TIfromIso_n} as
\begin{equation*}
\frac{
	64
	\left(
		n\left(V + W\right) - 
		2\,V\,Y
	\right)^{2}
	\left(
		n^{2} - 
		2\,n\,Y +
		U\left(W - V\right) +
		V\,Z -
		W\,Z +
		Y^{2}
	\right)
}{
	V^{3}\,W^{3}
}
=
0
\,,
\end{equation*}
whose solutions are
\begin{subequations}
\begin{align}
\label{eq:DiscDelta_0_Sol_Y}
Y &= \frac{n}{2}\left(1 + W\,V^{-1}\right)
\,,
\\
\label{eq:DiscDelta_0_Sol_W}
W &= V
\quad{\rm and}\quad
Y = n
\,,
\\
\label{eq:DiscDelta_0_Sol_Z}
Z &= \frac{-n^{2} + U\,V - U\,W + 2\,n\,Y - Y^{2}}{V-W}
\,;
\end{align}
\end{subequations}
again, solution~\eqref{eq:DiscDelta_0_Sol_W} is a special case of solution~\eqref{eq:DiscDelta_0_Sol_Y} but we keep both for convenient referencing below.
We proceed to prove the existence of solution~\eqref{eq:DiscDelta_0_Sol_Y} for $n\geqslant4$ layers, followed by a numerical example to illustrate the result.
\subsection{Nondetachment}
\label{sec:Nondetachment}
Within the constraints of the stability, both detachment and nondetachment are permitted.
The nondetachment of the $qP$ slowness surface occurs if and only if $c^{\overline{\rm TI}}_{2323} = -c^{\overline{\rm TI}}_{1133}$\,.
From solution~\eqref{eq:DiscDelta_0_Sol_1}, and its parameterization~\eqref{eq:DiscDelta_0_Sol_Y}, it follows that roots~\eqref{eq:vqPqSV} are elliptical.
Hence, we have the following lemma.
\medskip
\begin{lemma}
\label{lem:qPwave_nLay}
There exists a~\citeauthor{Backus1962} average of at least four isotropic layers for which the Christoffel roots are elliptical.
\end{lemma}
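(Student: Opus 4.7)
The plan is to reduce the ellipticity condition~\eqref{eq:DiscDelta_0_Sol_Y} to a single linear equation in the positive reciprocals of the $(c_{1111})_i$, and then exhibit explicit layer data satisfying it within the stability bounds.

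First I introduce the substitution $x_i := 1/(c_{1111})_i > 0$ and $r_i := (c_{2323})_i/(c_{1111})_i$, so that stability $(c_{1111})_i > \tfrac{4}{3}(c_{2323})_i$ becomes $0 < r_i < 3/4$. The sums in~\eqref{eq:UVWYZ} become $W = \sum_i x_i$, $Y = \sum_i r_i$, and $V = \sum_i x_i/r_i$. The condition $Y = \tfrac{n}{2}(1 + W/V)$ in~\eqref{eq:DiscDelta_0_Sol_Y} rearranges to $V(2\bar Y - 1) = W$, and substituting for $V$ and $W$ yields the linear equation
\[
\sum_{i=1}^n \frac{x_i}{r_i}\bigl(2\bar Y - 1 - r_i\bigr) = 0, \qquad \bar Y := \tfrac{1}{n}\sum_{i=1}^n r_i.
\]
Since the $x_i$ are strictly positive, this equation admits a solution iff the bracket $2\bar Y - 1 - r_i$ is positive for some index and negative for some other index; equivalently, iff $\min_i r_i < 2\bar Y - 1 < \max_i r_i$.

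The second step is to realise this sign pattern inside the admissible box $r_i \in (0, 3/4)$. I would use the one-parameter family $r_1 = \varepsilon$, $r_2 = \cdots = r_n = r$. A direct computation gives $2\bar Y - 1 = \tfrac{2\varepsilon}{n} + \tfrac{2(n-1)}{n}r - 1$, so the required strict inequalities $\varepsilon < 2\bar Y - 1 < r$ reduce, as $\varepsilon \downarrow 0$, to $n/(2(n-1)) < r < 3/4$. This interval is nonempty precisely when $n \geqslant 4$, which matches the hypothesis of the lemma. Pick any such $r$ (for instance $n=4$ and $r = 7/10$) and $\varepsilon > 0$ sufficiently small; set $x_2 = \cdots = x_n = 1$; then the displayed equation determines a unique positive $x_1$. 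Recovering $(c_{1111})_i = 1/x_i$ and $(c_{2323})_i = r_i/x_i$ yields $n$ stable isotropic layers whose~\citeauthor{Backus1962} average satisfies~\eqref{eq:DiscDelta_0_Sol_1}. By the paragraph immediately preceding the lemma, both roots in~\eqref{eq:vqPqSV} are then elliptical, and the $SH$ root in~\eqref{eq:vSH} is always elliptical.

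The step I expect to be the main obstacle is the interplay between the sign condition and the stability bound $r_i < 3/4$: it is exactly this bound, together with the arithmetic inequality $n/(2(n-1)) < 3/4 \iff n \geqslant 4$, that pins down when such a one-parameter family exists. Everything else reduces to elementary linear algebra for $x_1$ and a direct verification of the stability inequality for each recovered layer.
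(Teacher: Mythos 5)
Your proof is correct, and it reaches the lemma by a genuinely different mechanism than the paper's. The paper also builds an example with one anomalous layer and $n-1$ identical ones---it fixes $a>\tfrac{4}{3}$, takes $c_{1111}=\{a/x,\dots,a/x\}$ and $c_{2323}=\{x,1/x,\dots,1/x\}$, so that in your notation the ratios are $r_{1}=x^{2}/a$ (small) and $r_{2}=\dots=r_{n}=1/a$ (near $3/4$)---but it concludes existence nonconstructively, by showing that $Y(x)-\tfrac{n}{2}\left(1+W(x)V(x)^{-1}\right)$ changes sign as $x$ runs over $(0,1)$ and invoking the Intermediate Value Theorem. You instead observe that, once the ratios $r_{i}$ are fixed, condition~\eqref{eq:DiscDelta_0_Sol_Y} is linear and homogeneous in the reciprocals $x_{i}=1/\left(c_{1111}\right)_{i}$, so solvability in positive $x_{i}$ is exactly the sign condition $\min_{i}r_{i}<2\bar Y-1<\max_{i}r_{i}$; this yields an explicit closed-form layer stack and explains structurally why four layers are needed for this family, via $n/(2(n-1))<3/4\iff n\geqslant 4$, whereas the paper's $n\geqslant4$ emerges only from comparing the two endpoint limits. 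One small point to make explicit in a final write-up: your ``iff'' for solvability tacitly sets aside the degenerate case in which every bracket $2\bar Y-1-r_{i}$ vanishes, but that case forces all $r_{i}=1$, which stability ($r_{i}<3/4$) excludes, so nothing is lost. Both arguments then rely on the same preceding discussion---that solution~\eqref{eq:DiscDelta_0_Sol_1}, equivalently~\eqref{eq:DiscDelta_0_Sol_Y}, makes roots~\eqref{eq:vqPqSV} elliptical while~\eqref{eq:vSH} is always elliptical---to conclude.
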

\begin{proof}
We fix $a>\tfrac{4}{3}$ and let $x\in\left(0\,,1\right]$ so that
\begin{equation}
\label{eq:Lemma_c1111_c2323}
c_{1111}
=
\left\{\,
	\frac{a}{x}\,,\frac{a}{x}\,,\frac{a}{x}\,,\frac{a}{x}
\,\right\}
\qquad{\rm and}\qquad
c_{2323}
=
\left\{
	x\,,\frac{1}{x}\,,\frac{1}{x}\,,\frac{1}{x}
\right\}
\,.
\end{equation}
Following defintions~\eqref{eq:UVWYZ}, variables $Y\,,\,V$\,, and $W$ of solution~\eqref{eq:DiscDelta_0_Sol_Y} become
\begin{align*}
Y(x)
&=
\frac{x_{1}^{2}}{a} 
+ 
\sum\limits_{i=2}^{n} 
\frac{\left(c_{2323}\right)_{i}}{\left(c_{1111}\right)_{i}}
=
\frac{x^{2}}{a} 
+ 
\sum\limits_{i=2}^{n} 
\frac{1/x}{a/x}
=
\frac{x^{2}}{a} 
+ 
\frac{n-1}{a}
\,,
\\
V(x)
&=
\frac{1}{x} 
+ 
\sum\limits_{i=2}^{n} 
\frac{1}{\left(c_{2323}\right)_{i}}
=
\frac{1}{x} 
+ 
\sum\limits_{i=2}^{n} 
\frac{1}{1/x}
=
\frac{1+\left(n-1\right)x^{2}}{x}
\,,
\\
W(x)
&=
\sum\limits_{i=1}^{n} 
\frac{1}{\left(c_{1111}\right)_{i}}
=
\sum\limits_{i=1}^{n} 
\frac{1}{a/x}
=
\frac{n\,x}{a}
\,.
\end{align*}
We define $X(x):=\tfrac{n}{2}\left(1+W(x)\,V(x)^{-1}\right)$\,, which results in
\begin{equation*}
X(x)
=
\frac{n}{2}
\left(
	1
	+
	\left(\frac{n\,x}{a}\right)
	\left(\frac{1+\left(n-1\right)x^{2}}{x}\right)^{\!\!-1}
\right)
=
\frac{n}{2}
\left(
	1
	+
	\frac{1}{a}
	\left(\frac{n\,x^{2}}{1+\left(n-1\right)x^{2}}\right)
\right)
\,.
\end{equation*}
As $x\to0^{+}$\,, we have
\begin{equation*}
Y(0^{+})\to\frac{n-1}{a}\approx\frac{3}{4}(n-1)
\,,\quad
X(0^{+})\to\frac{n}{2}
\,,
\end{equation*}
and, hence, $Y(0^{+}) > X(0^{+})$ for $a$ close to $\tfrac{4}{3}$\,.
Furthermore, as $x\to1^{-}$\,,
\begin{equation*}
Y(1^{-})\to\frac{1}{a}+\frac{n-1}{a}
\quad{\rm and}\quad
X(1^{-})\to\frac{n}{2}\left(1+\frac{1}{a}\right)
\,.
\end{equation*}
Thus, for $n\geqslant4$\,, and $a$ close to, but greater than, $\tfrac{4}{3}$\,, \,$Y(1^{-}) < X(1^{-})$\,.

It follows form the Intermediate Value Theorem that there exists an $x\in (0,1)$ for which $Y(x)=X(x),$ which completes the proof.
\end{proof}
\subsection{Numerical example}
\label{sec:NumericalExample}
Let us consider a numerical example, where
\begin{equation}
\label{eq:NumericalResults}
x = 0.229852282578204
\,,\quad
a = 1.344000000000000
\quad{\rm and}\quad
Y = X = 2.271452434379770\,;
\end{equation}
indeed, there exists a~\citeauthor{Backus1962} average of at least four isotropic layers, whose Christoffel roots are elliptical.

Using results~\eqref{eq:NumericalResults} with parameters~\eqref{eq:Lemma_c1111_c2323}, we obtain
\begin{equation*}
c_{1111}
= 
\left\{
	5.8472\,,\,
	5.8472\,,\,
	5.8472\,,\,
	5.8472
\right\}
\quad{\rm and}\quad
c_{2323}
= 
\left\{
	0.2299\,,\,
	4.3506\,,\,
	4.3506\,,\,
	4.3506
\right\}
;
\end{equation*}
the~\citeauthor{Backus1962}-average parameters, following expressions~\eqref{eq:BackusParameters_TIfromIso_n}, are
\begin{equation}
\label{eq:BA_param_numer}
\begin{array}{ccc}
c^{\overline{\rm TI}}_{1111} = 3.6692\,,
&
c^{\overline{\rm TI}}_{1122} = -2.9717\,,
&
c^{\overline{\rm TI}}_{3333} = 5.8472\,,
\\[5pt]
c^{\overline{\rm TI}}_{1133} = -0.7936\,,
&
c^{\overline{\rm TI}}_{2323} = 0.7936\,,
&
c^{\overline{\rm TI}}_{1212} = 3.3204\,.
\end{array}
\end{equation}
The eigenvalues of tensor~\eqref{eq:cTI} with values~\eqref{eq:BA_param_numer} are
$\lambda_{1} = \lambda_{2} = 6.6409$\,,
$\lambda_{3} = 6.0812$\,,
$\lambda_{4} = \lambda_{5} = 1.5873$\,,
$\lambda_{6} = 0.4635$\,,
which belong to a transversely isotropic tensor~\citep{BonaEtAl2007a};
since they are positive, the stability condition of the average are satisfied.
Also, ${\rm Disc}\left(\Delta\right) = 1.9883\times10^{-12}$\,, which can be considered zero, as required.
Consequently, equation~\eqref{eq:vqPqSV} becomes
\begin{equation}
\label{eq:vqPqSV_num}
v_{qP,qSV}
=
\sqrt{
	\frac{
		4.4628 + 
		2.1781\,\left(1 - n_{3}\right)
		\pm
		\sqrt{
			62.8718\,\left(0.6373 - n_{3}\right)^{2}
		}
	}{
		2\,\rho
	}
}
\,.
\end{equation}
Recalling that $n_{3}=\sin^{2}\theta$\,, and since $\rho$ must be a positive scalar quantity, we see that equation~\eqref{eq:vqPqSV_num} is $\sqrt{d + f\sin^{2}\vartheta}$\,, as required.
Letting $\rho = 1$\,, the three roots are
\begin{subequations}
\label{eq:v_Ellipses}
\begin{align}
\label{eq:vqP_Ellipse}
v_{qP}
&=
\sqrt{
	\frac{
		\left(
			c^{\overline{\rm TI}}_{3333} -
			c^{\overline{\rm TI}}_{1111}
		\right)
		\left(1 - n_{3}\right) +
		c^{\overline{\rm TI}}_{1111} +
		c^{\overline{\rm TI}}_{2323} +
		\sqrt{\Delta}
	}{
		2\,\rho
	}
}
=
\sqrt{5.8472 + 5.0536\,\sin^{2}\vartheta}
\\
\label{eq:vqSV_Ellipse}
v_{qSV}
&=
\sqrt{
	\frac{
		\left(
			c^{\overline{\rm TI}}_{3333} -
			c^{\overline{\rm TI}}_{1111}
		\right)
		\left(1 - n_{3}\right) +
		c^{\overline{\rm TI}}_{1111} +
		c^{\overline{\rm TI}}_{2323} -
		\sqrt{\Delta}
	}{
		2\,\rho
	}
}
=
\sqrt{0.7936 - 2.8756\,\sin^{2}\vartheta}\,,
\\
\label{eq:vSH_Ellipse}
v_{SH}
&=
\sqrt{
	\frac{
		c^{\overline{\rm TI}}_{1212}\,
		n_{3}
		+ 
		c^{\overline{\rm TI}}_{2323}\,
		\left(1 - n_{3}\right)
	}{
		\rho
	}
}
=
\sqrt{0.7936 + 2.5268\,\sin^{2}\vartheta}
\,.
\end{align}
\end{subequations}
The reciprocals of expressions~\eqref{eq:v_Ellipses} are ellipses, illustrated in Figure~\ref{fig:Slownesses}.
Therein, the grey curve represents $1/v_{SH}$\,; the black curves represent $1/v_{qP}$ and $1/v_{qSV}$\,.

\begin{figure}
\centering
\includegraphics[width=0.75\textwidth]{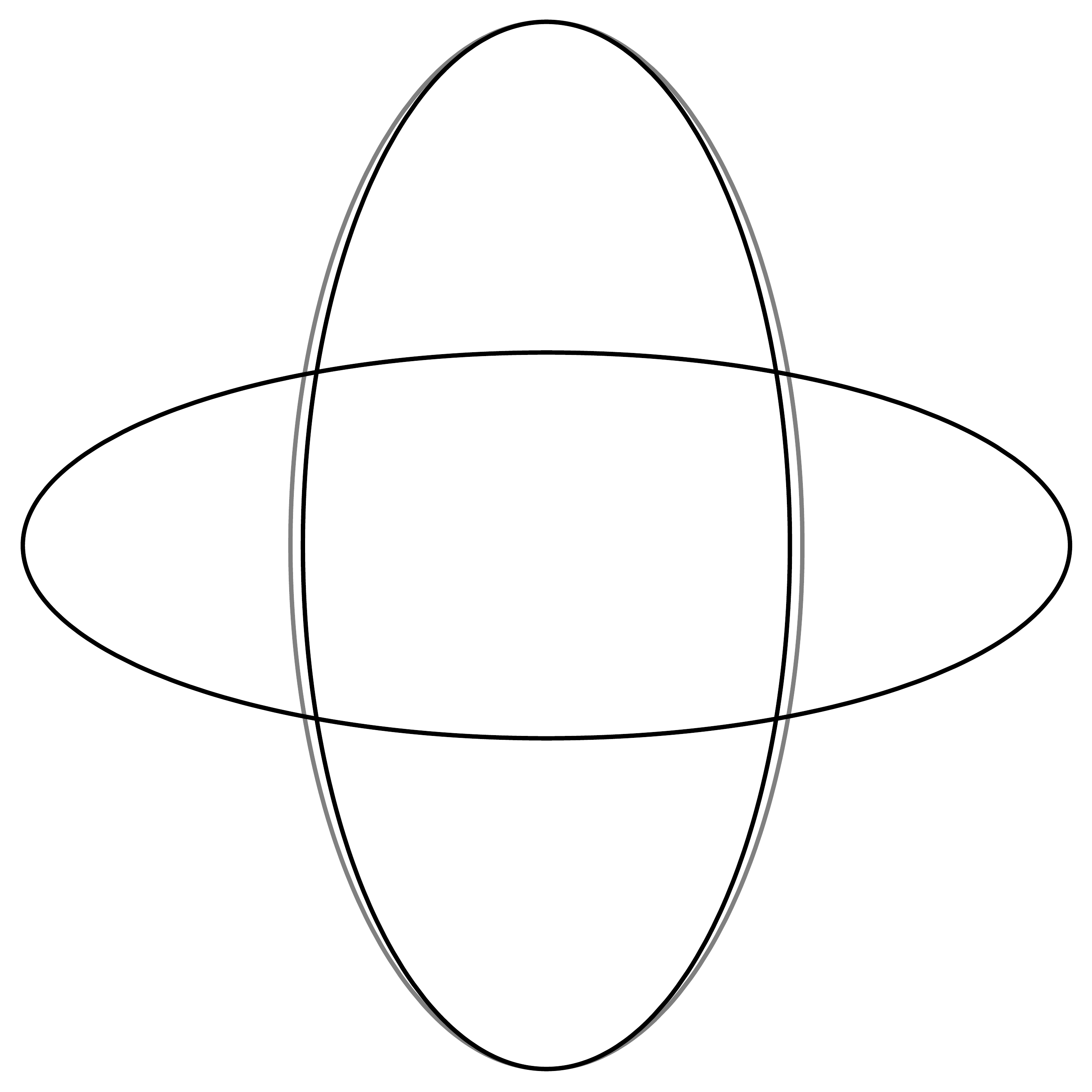}
\caption{	
Three Christoffel roots resulting in three slowness curves
}
\label{fig:Slownesses}
\end{figure}
\subsection{Interpretation}
\label{sec:Interpretation}
\begin{figure}
\centering
\begin{subfigure}{0.3\textwidth}
	\includegraphics[width=\textwidth]{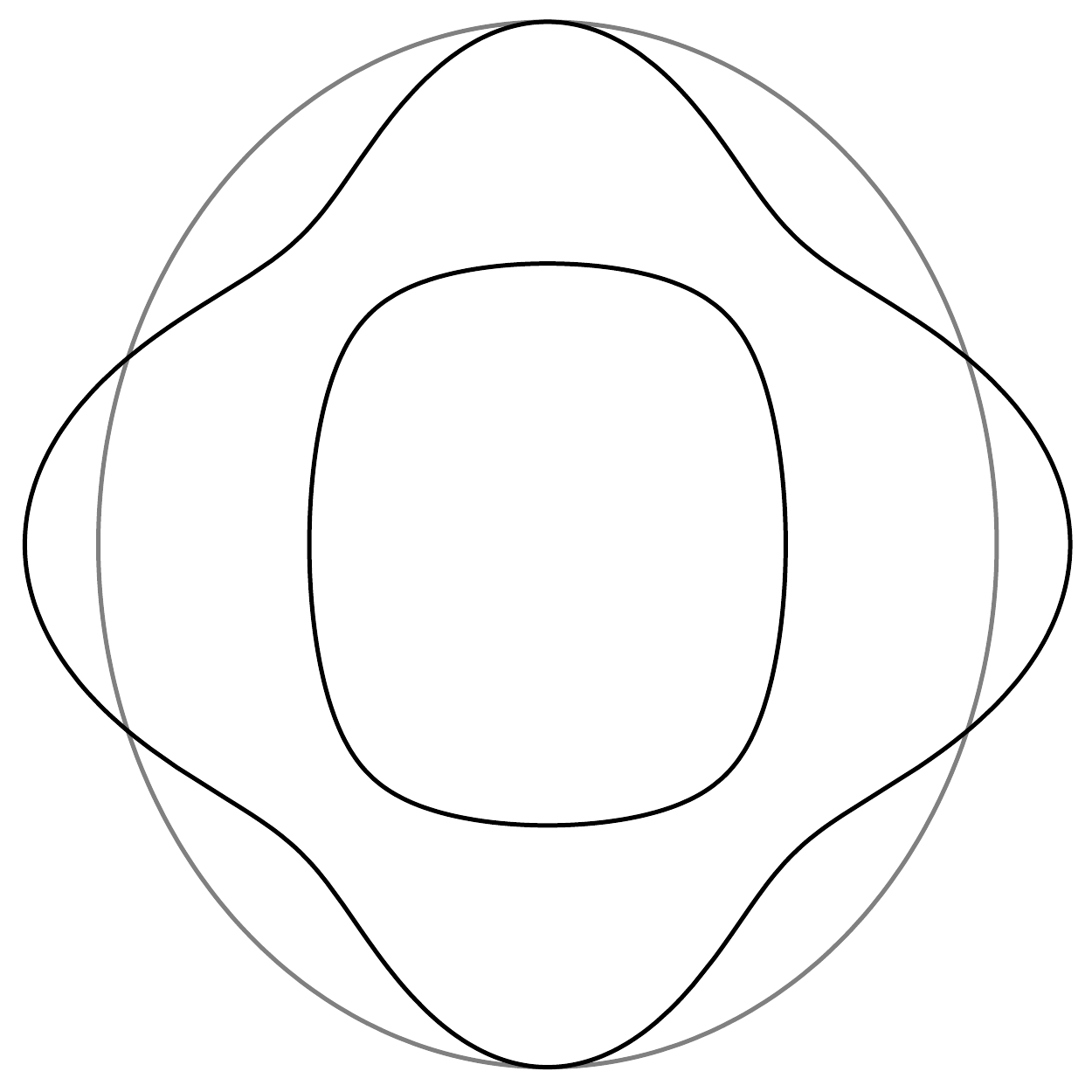}
	\caption{}
	\label{fig:GRS_Det1}
\end{subfigure}
\begin{subfigure}{0.3\textwidth}
	\includegraphics[width=\textwidth]{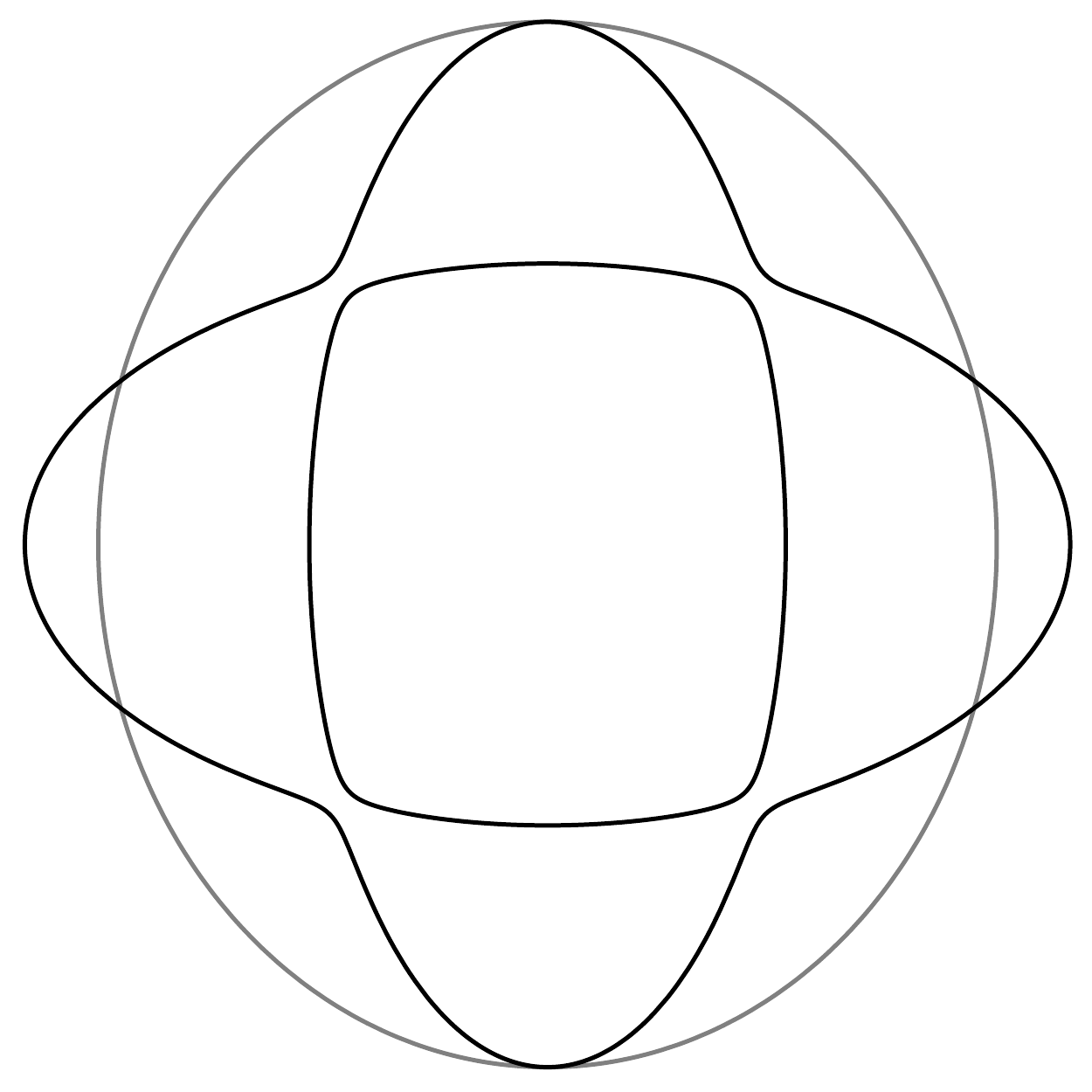}
	\caption{}
	\label{fig:GRS_AlmostNonDet}
\end{subfigure}
\begin{subfigure}{0.3\textwidth}
	\includegraphics[width=\textwidth]{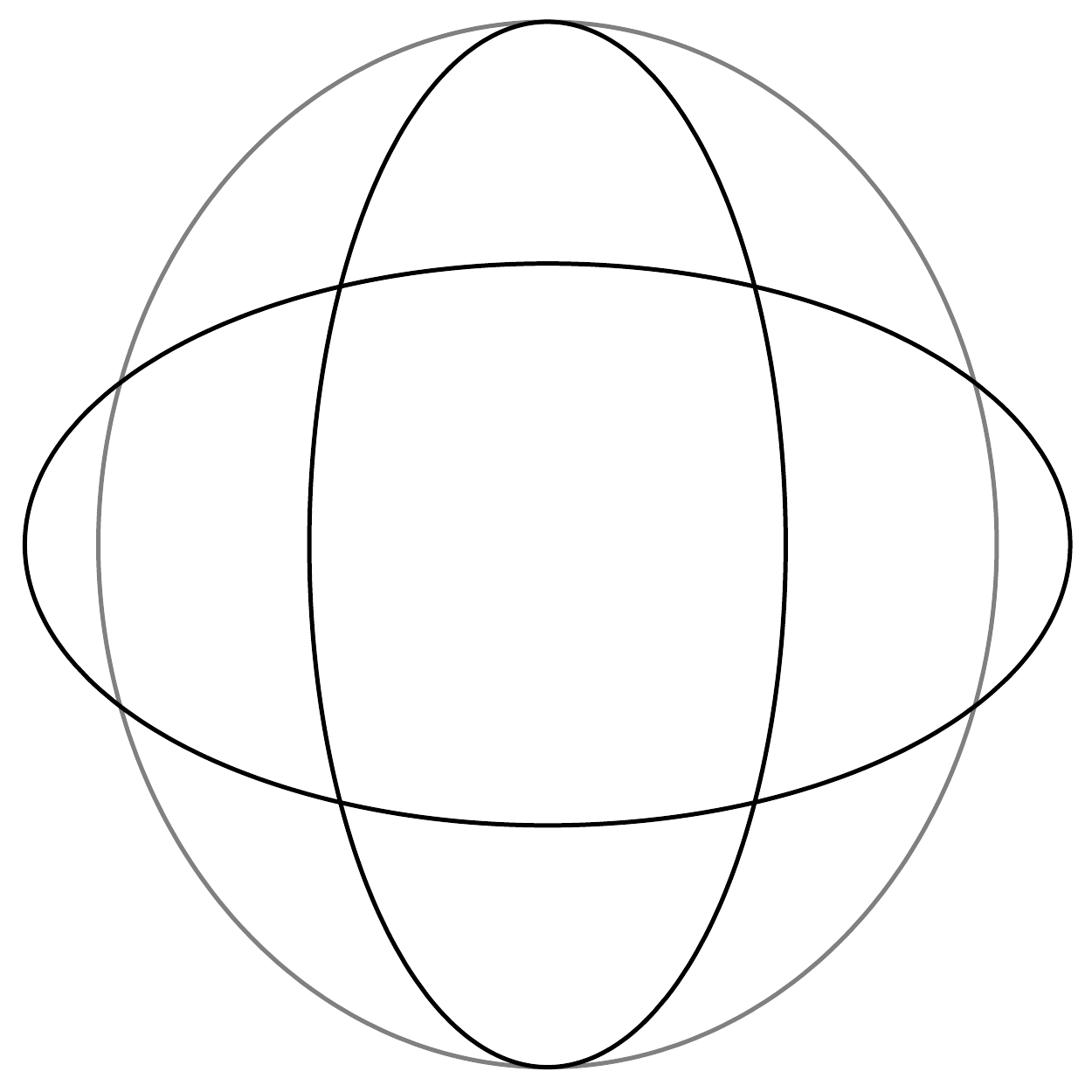}
	\caption{}
	\label{fig:GRS_NonDet1}
\end{subfigure}
\caption{
Slowness curves, $1/v_{qP}$\,, $1/v_{qSV}$\,, $1/v_{SH}$\,, for modified values of elasticity parameters for Green-River shale.
As $c^{\rm TI}_{1133}\to -c^{\rm TI}_{2323}$\,, the innermost curve ceases to be detached and smooth.
}
\label{fig:GRSlownesses}
\end{figure}

Although values~\eqref{eq:BA_param_numer} do not represent typical Hookean solid used in seismology, they satisfy the stability condition and can appear in computational searches.
To gain an insight into the appearance of slowness curves in Figure~\ref{fig:Slownesses}, let us examine an example using the density-scaled elasticity parameters for Green-River shale~(e.g.,~\citet[Exercise 9.3]{Slawinski2015}; \citet[Table~1]{Thomsen1986}),
\begin{equation}
\label{eq:GreenRiverShale}
c^{\rm TI}_{1111} = 13.55\,,\qquad
c^{\rm TI}_{1133} = 1.47\,,\qquad
c^{\rm TI}_{3333} = 9.74\,,\qquad
c^{\rm TI}_{2323} = 2.81\,,\qquad
c^{\rm TI}_{1212} = 3.81\,,
\end{equation}
where each parameter is scaled by $10^{6}$\,; herein, superscript ${}^{\rm TI}$ refers to an intrinsically transversely isotropic medium, as opposed to ${}^{\overline{\rm TI}}$\,, which is a Backus average.

The curves of $1/v_{qP}$\,, $1/v_{qSV}$ and $1/v_{SH}$ with values~\eqref{eq:GreenRiverShale} are illustrated in Figure~\ref{fig:GRSlownesses}(\subref{fig:GRS_Det1}); such curves are typical in seismology.
The progression of Figures~\ref{fig:GRSlownesses}(\subref{fig:GRS_Det1})--(\subref{fig:GRS_NonDet1}), however, illustrates an important property.
For detached $qP$ slowness surfaces, the expressions for the $qP$\,, $qSV$ and $SH$ wavefront speeds, indeed, correspond to distinct smooth wavefronts.
However, if $c^{\rm TI}_{1133}=-c^{\rm TI}_{2323}$\,, the $qP$ and $qSV$ slowness surfaces lose their smoothness.
Also, their expressions---not only their curves---become connected with one another; neither root corresponds to a distinct slowness curve nor does a given curve result from a single root.

For each slowness surface, the criterion of belonging to a particular wave on either side of an intersection is not its belonging to a single root but the orientation of the corresponding eigenvectors, which are the displacement vectors of a given wave~\citep{BonaEtAl2007b}.
As can be readily shown, for the innermost surface, the displacement vector is normal to it along the rotation-symmetry axis and in the plane perpendicular to it; hence, it corresponds to the $qP$ wave.
Figure~\ref{fig:GRSlownesses} supports heuristically a rigorous statement based on the eigendecomposition theorem.
\section{Ellipticity condition}
According to~\citet{Thomsen1986}, the ellipticity condition is $\varepsilon = \delta$\,, where 
\begin{equation}
\label{eq:EpsDelta}
\varepsilon
=
\frac{
	c^{\rm TI}_{1111} -
	c^{\rm TI}_{3333}
}{
	2\,c^{\rm TI}_{3333}
}
\qquad{\rm and}\qquad
\delta
=
\frac{
	\left(
		c^{\rm TI}_{1133} +
		c^{\rm TI}_{2323}
	\right)^{2} -
	\left(
		c^{\rm TI}_{3333} -
		c^{\rm TI}_{2323}
	\right)^{2}
}{
	2\,c^{\rm TI}_{3333}
	\left(
		c^{\rm TI}_{3333} -
		c^{\rm TI}_{2323}
	\right)
}
\,,
\end{equation}
for either ${}^{\rm TI}$ or ${}^{\overline{\rm TI}}$\,.
However, equations~\eqref{eq:vqP_Ellipse}--\eqref{eq:vSH_Ellipse} lead to ellipsoidal forms, even though, therein, $\varepsilon = -0.2363\neq-0.4445=\delta$\,.
To avoid this discrepancy, we state the following proposition with a qualifier.
\medskip
\begin{proposition}
\label{prop:Ellipticity}
The detached $qP$ slowness surface is ellipsoidal if and only if $\varepsilon=\delta$\,.
\end{proposition}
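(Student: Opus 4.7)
The plan is to reduce the ellipsoidal condition for the $qP$ slowness surface, via the discriminant analysis already carried out in Section~\ref{sec:ChristoffelEquation}, and then to invoke the detachment hypothesis to isolate the correct branch of solutions.

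First, from the discussion following~\eqref{eq:vqPqSV}--\eqref{eq:vSH}, the reciprocal of $v_{qP}$ is an ellipsoid if and only if $\Delta=a\,n_{3}^{2}+b\,n_{3}+c$ is a perfect square in $n_{3}$, equivalently ${\rm Disc}(\Delta)=0$. The solutions of ${\rm Disc}(\Delta)=0$ are listed in~\eqref{eq:DiscDelta_0_Sol_1}--\eqref{eq:DiscDelta_0_Sol_3}, with~\eqref{eq:DiscDelta_0_Sol_2} subsumed by~\eqref{eq:DiscDelta_0_Sol_1}. By Section~\ref{sec:Nondetachment}, solution~\eqref{eq:DiscDelta_0_Sol_1} precisely characterizes the nondetached case $c^{\rm TI}_{2323}=-c^{\rm TI}_{1133}$. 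Hence, under the hypothesis that the $qP$ slowness surface is detached, ellipsoidality is equivalent to solution~\eqref{eq:DiscDelta_0_Sol_3}.

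It then remains to verify that~\eqref{eq:DiscDelta_0_Sol_3} is equivalent to Thomsen's condition $\varepsilon=\delta$, with $\varepsilon$ and $\delta$ as in~\eqref{eq:EpsDelta}. I would clear denominators by multiplying the identity $\varepsilon=\delta$ through by $2\,c^{\rm TI}_{3333}(c^{\rm TI}_{3333}-c^{\rm TI}_{2323})$, expand both sides, cancel the matching $(c^{\rm TI}_{3333})^{2}$ and $(c^{\rm TI}_{2323})^{2}$ terms, and then solve for $c^{\rm TI}_{3333}$. A direct calculation yields exactly the right-hand side of~\eqref{eq:DiscDelta_0_Sol_3}, and the manipulation is reversible, giving the ``if and only if''.

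The only subtlety, and the main thing to keep clean, is the handling of degenerate cases. Solving for $c^{\rm TI}_{3333}$ in the last step requires dividing by $c^{\rm TI}_{1111}-c^{\rm TI}_{2323}$; when this quantity vanishes one recovers solution~\eqref{eq:DiscDelta_0_Sol_2}, which is incompatible with detachment and, in the Backus setting, with the stability condition. Likewise, the definition of $\delta$ implicitly assumes $c^{\rm TI}_{3333}\neq c^{\rm TI}_{2323}$, which holds away from the isotropic degeneracy. Modulo these caveats, the proof is a short algebraic verification with no hidden obstacle; the substantive content of the proposition is really the qualifier \emph{detached}, which rules out the ellipsoidal-but-nonelliptical-in-Thomsen's-sense branch exhibited by the numerical example~\eqref{eq:BA_param_numer}.
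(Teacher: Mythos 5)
Your proof is correct and takes essentially the same route as the paper's: both reduce ellipsoidality to ${\rm Disc}(\Delta)=0$, identify $\varepsilon=\delta$ with solutions~\eqref{eq:DiscDelta_0_Sol_2} or~\eqref{eq:DiscDelta_0_Sol_3}, and use detachment to exclude the branch~\eqref{eq:DiscDelta_0_Sol_1} (and hence~\eqref{eq:DiscDelta_0_Sol_2}). If anything, your version is the cleaner logical chain: you make explicit the algebraic equivalence of $\varepsilon=\delta$ with~\eqref{eq:DiscDelta_0_Sol_3} and the degenerate case $c^{\rm TI}_{1111}=c^{\rm TI}_{2323}$, where the paper simply asserts the equivalence and then leans on Backus-average-specific exclusions.
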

\begin{proof}
Following expressions~\eqref{eq:EpsDelta}, $\varepsilon = \delta$ if and only if
\begin{equation*}
c^{\rm TI}_{2323} 
= 
-c^{\rm TI}_{1133}
\quad{\rm and}\quad
c^{\rm TI}_{1111} 
= 
c^{\rm TI}_{2323}
\qquad{\rm or}\qquad
c^{\rm TI}_{3333}
=
\frac{
	\left(c^{\rm TI}_{1133}\right)^{2} 
	+ 
	c^{\rm TI}_{1111}\,
	c^{\rm TI}_{2323} 
	+ 
	2\,
	c^{\rm TI}_{1133}\,
	c^{\rm TI}_{2323}
}{
	c^{\rm TI}_{1111} - 
	c^{\rm TI}_{2323}
}
\,,
\end{equation*}
which are solutions~\eqref{eq:DiscDelta_0_Sol_2} and~\eqref{eq:DiscDelta_0_Sol_3}\,, respectively.
Solution~\eqref{eq:DiscDelta_0_Sol_1}, $c^{\rm TI}_{2323}=-c^{\rm TI}_{1133}$\,, is---in general---the condition for nondetachment, and---for the Backus average---is the condition for elliptical roots.
For the Backus average, solution~\eqref{eq:DiscDelta_0_Sol_2} is not allowed within the stability condition and solution~\eqref{eq:DiscDelta_0_Sol_3} results in an isotropic average, hence, circular roots.
Thus, the ellipticity condition, $\varepsilon=\delta$\,, is valid for detached $qP$ slowness surfaces only.
\end{proof}

\begin{figure}
\begin{subfigure}{.3\textwidth}
	\includegraphics[width=\textwidth]{Fig_GR_Slowness_Detached.pdf}
	\caption{}
	\label{subfig:GRS_Det2}
\end{subfigure}
\begin{subfigure}{.3\textwidth}
	\includegraphics[width=\textwidth]{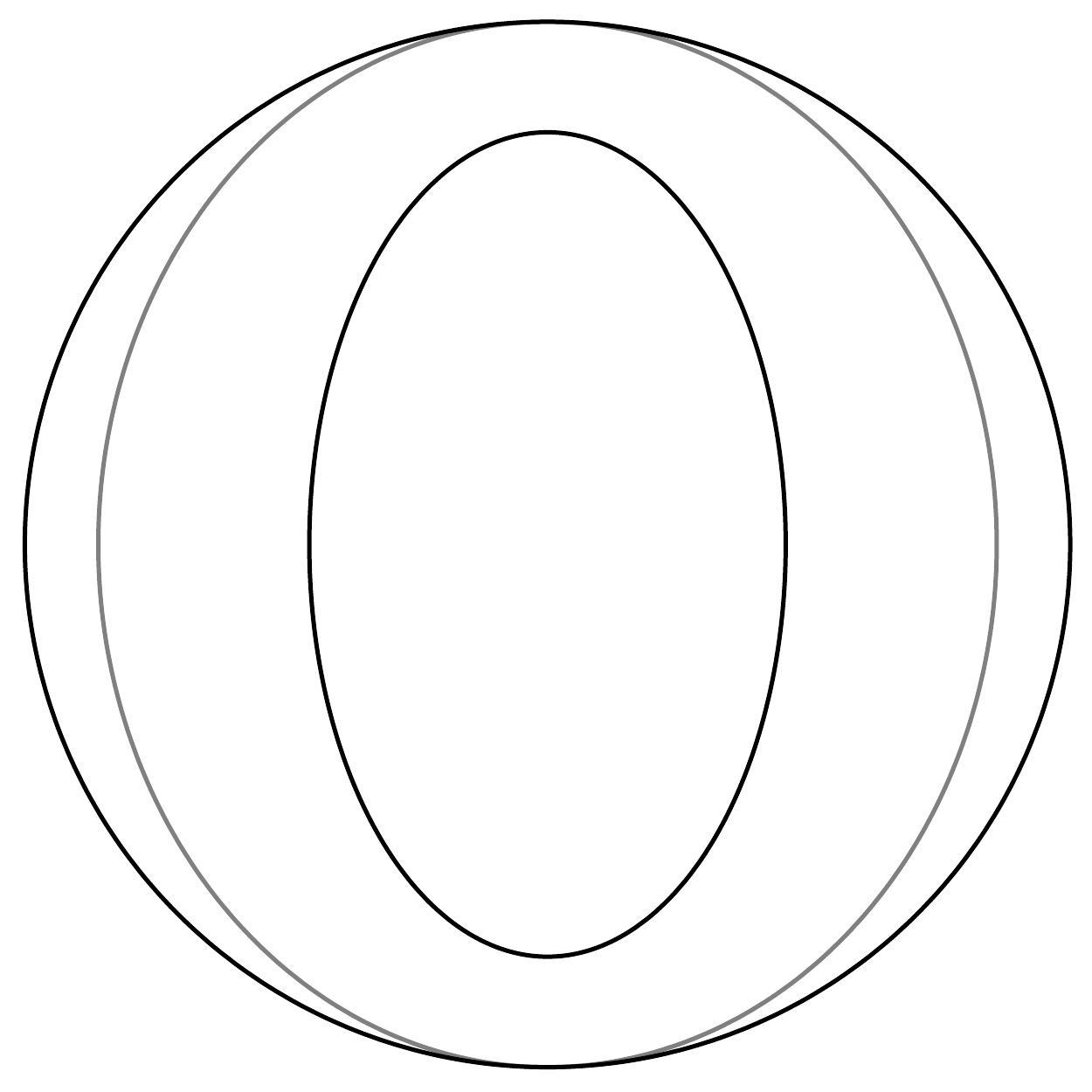}
	\caption{}
	\label{subfig:GRS_Ellip}
\end{subfigure}
\begin{subfigure}{.3\textwidth}
	\includegraphics[width=\textwidth]{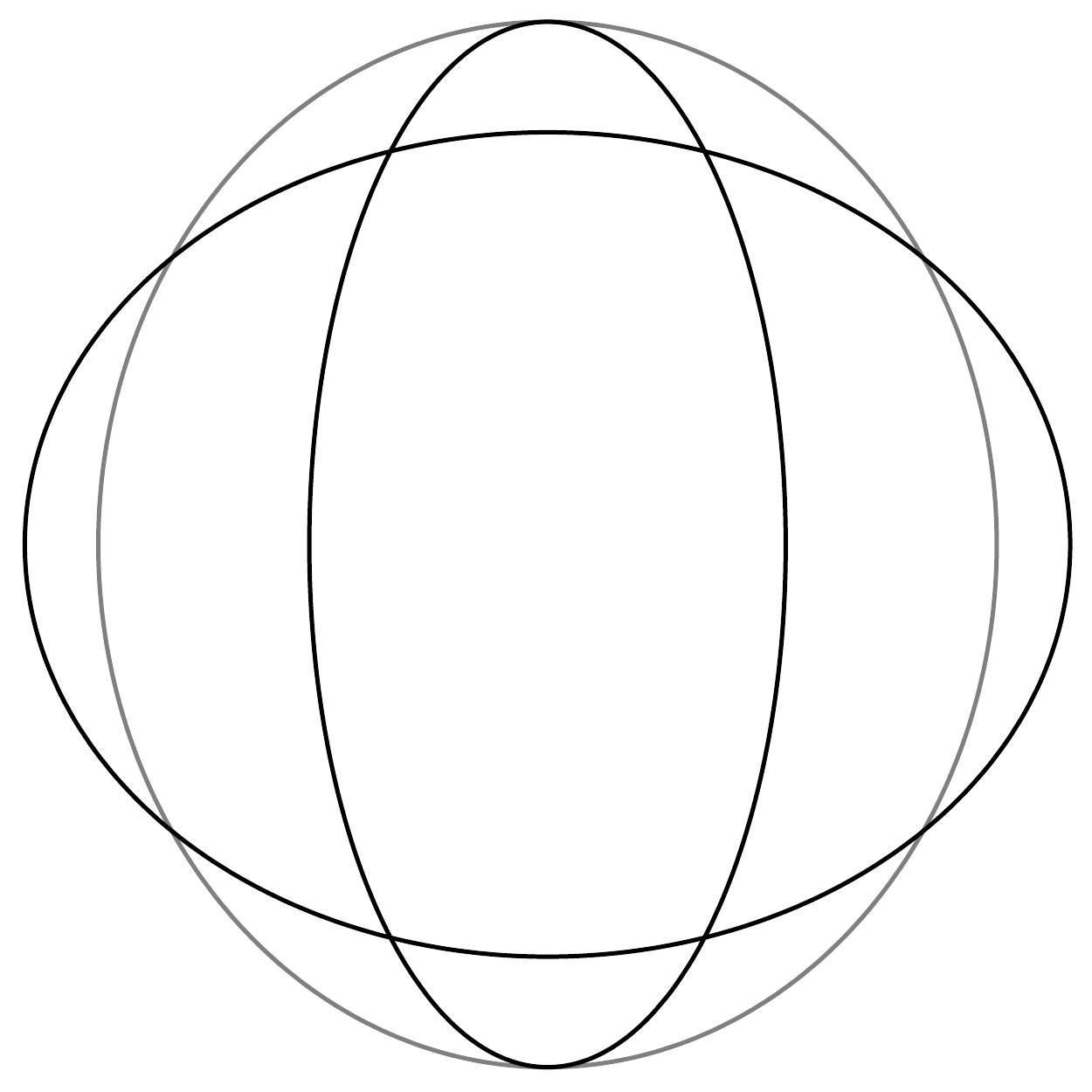}
	\caption{}
	\label{subfig:GRS_Ellip_NonDet}
\end{subfigure}
\caption{
Slowness curves, $1/v_{qP}$\,, $1/v_{qSV}$\,, $1/v_{SH}$\,, for modified values of elasticity parameters for Green-River shale.
Plot~(\subref{subfig:GRS_Ellip}) corresponds to expression~\eqref{eq:DiscDelta_0_Sol_3}; plot~(\subref{subfig:GRS_Ellip_NonDet}) corresponds to expression~\eqref{eq:DiscDelta_0_Sol_3} followed by expression~\eqref{eq:DiscDelta_0_Sol_1}.
}
\label{fig:GRSlowness_Ellip_NonDet}
\end{figure}

\begin{figure}
\begin{subfigure}{.3\textwidth}
	\includegraphics[width=\textwidth]{Fig_GR_Slowness_Detached.pdf}
	\caption{}
	\label{fig:GRS_Det3}
\end{subfigure}
\begin{subfigure}{.3\textwidth}
	\includegraphics[width=\textwidth]{Fig_GR_Slowness_NonDetached.pdf}
	\caption{}
	\label{subfig:GRS_NonDet2}
\end{subfigure}
\begin{subfigure}{.3\textwidth}
	\includegraphics[width=\textwidth]{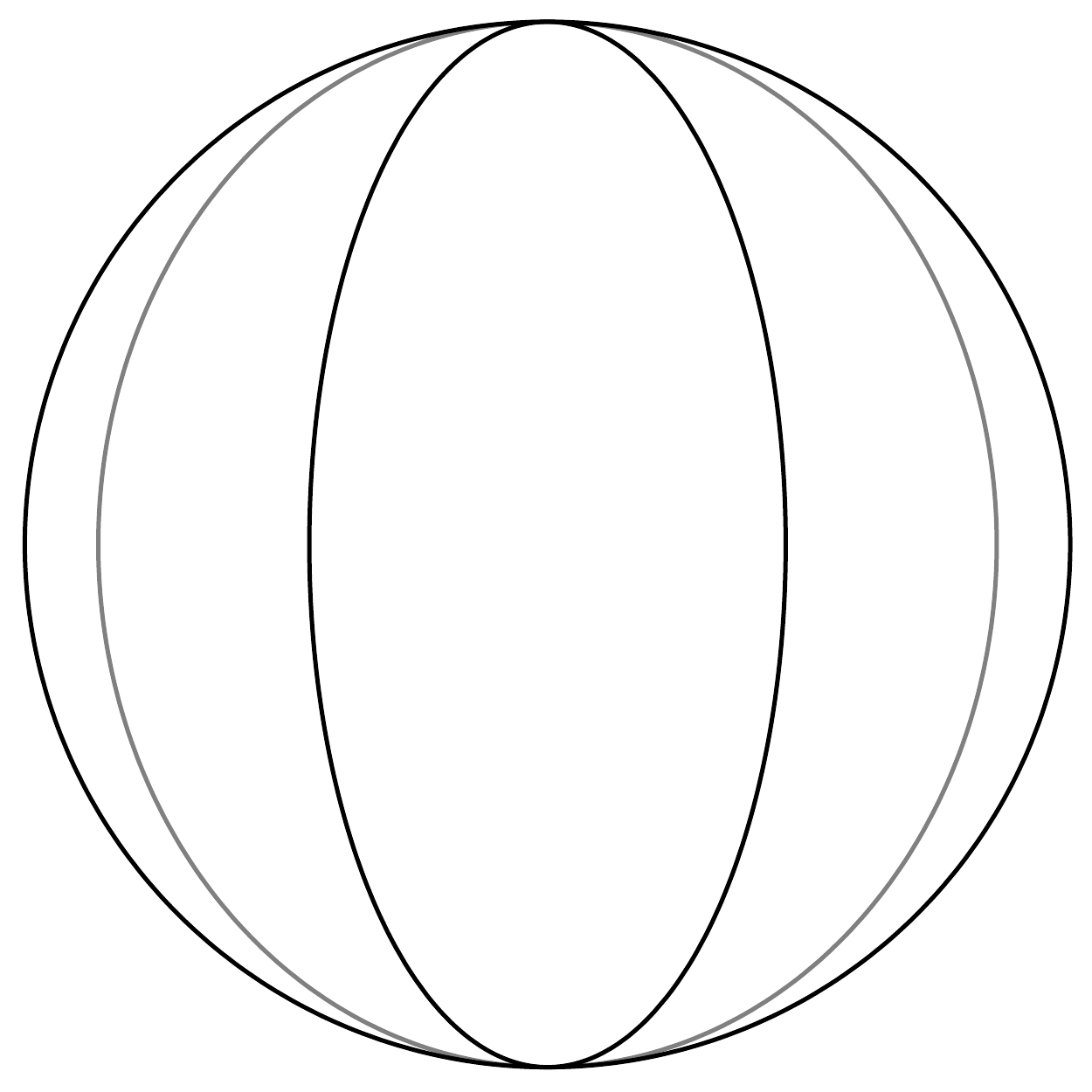}
	\caption{}
	\label{subfig:GRS_Ellip_NonDet}
\end{subfigure}
\caption{
Slowness curves, $1/v_{qP}$\,, $1/v_{qSV}$\,, $1/v_{SH}$\,, for modified values of elasticity parameters for Green-River shale.
Plot~(\subref{subfig:GRS_Ellip}) corresponds to modifications by expression~\eqref{eq:DiscDelta_0_Sol_1}; plot~(\subref{subfig:GRS_Ellip_NonDet}) is modified by expression~\eqref{eq:DiscDelta_0_Sol_1} followed by expression~\eqref{eq:DiscDelta_0_Sol_3}.
}
\label{fig:GRSlowness_NonDet_Ellip}
\end{figure}

To gain an insight into Proposition~\ref{prop:Ellipticity}, we modify parameters for Green-River shale---illustrated in plot~\ref{fig:GRSlowness_Ellip_NonDet}(\subref{subfig:GRS_Det2})---by applying expression~\eqref{eq:DiscDelta_0_Sol_3} to obtain an ellipsoidal $qP$ slowness surface illustrated in plot~\ref{fig:GRSlowness_Ellip_NonDet}(\subref{subfig:GRS_Ellip}).
Applying subsequently expression~\eqref{eq:DiscDelta_0_Sol_1}, we obtain three elliptical Christoffel roots, where $\varepsilon\neq\delta$\,, as expected in view of expressions~\eqref{eq:vqP_Ellipse} and~\eqref{eq:vqSV_Ellipse}.
The innermost slowness surface is neither detached nor elliptical, as illustrated in plot~\ref{fig:GRSlowness_Ellip_NonDet}(\subref{subfig:GRS_Ellip_NonDet}).

Let us apply these expressions in the opposite order.
Using expression~\eqref{eq:DiscDelta_0_Sol_1}, we obtain the result illustrated in plots~\ref{fig:GRSlownesses}(\subref{fig:GRS_NonDet1}) and~\ref{fig:GRSlowness_NonDet_Ellip}(\subref{subfig:GRS_NonDet2}).
Applying subsequently expression~\eqref{eq:DiscDelta_0_Sol_3}, we obtain three elliptical slowness surfaces, illustrated in plot~\ref{fig:GRSlowness_NonDet_Ellip}(\subref{subfig:GRS_Ellip_NonDet}).
The stability conditions are satisfied but $\delta$ is indeterminate.
However, if we let $c^{\rm TI}_{2323}\approx-c^{\rm TI}_{1133}$\,, as opposed to $c^{\rm TI}_{2323}=-c^{\rm TI}_{1133}$\,, we obtain $\delta=\varepsilon$\,, as a consequence of detachment.

In contrast to these results, the~\citeauthor{Backus1962} average---for either order of expressions~\eqref{eq:DiscDelta_0_Sol_1} and~\eqref{eq:DiscDelta_0_Sol_3}---does not satisfy the stability condition.
\section{Conclusion}
The only restriction on the values of the elasticity parameters is the stability condition.
Within this condition, we examine properties of the Christoffel roots for nondetached $qP$ slowness surfaces in transversely isotropic media.
The~$qP$ slowness surface is detached if and only if~$c^{\rm TI}_{2323}\neq-c^{\rm TI}_{1133}$\,.
Under such a condition, each root corresponds to a distinct smooth wavefront.
The~$qP$ slowness surface is nondetached if and only if~$c^{\rm TI}_{2323} = -c^{\rm TI}_{1133}$\,.
Under such conditions, the roots are elliptical but do not correspond to distinct wavefronts; also, the $qP$ and $qSV$ slowness surfaces are not smooth.
\section*{Acknowledgements}
The authors wish to acknowledge Ran Bachrach for fruitful discussions leading them to revisit the classic theorem and statements of~\citet{Helbig1979,Helbig1983}, and Elena Patarini for her graphical support.

This research was performed in the context of The Geomechanics Project supported by Husky Energy. 
Also, this research was partially supported by the Natural Sciences and Engineering Research Council of Canada, grant~202259.
\bibliographystyle{apa}
\bibliography{BSS_Slowness_arXiv.bib}
\end{document}